\documentclass{amsart}

\usepackage{amssymb}
\usepackage{amsfonts}
\usepackage{graphicx}

\newtheorem{theorem}{Theorem}
\theoremstyle{plain}

\newtheorem{definition}{Definition}
\newtheorem{property}{Property}
\newtheorem{lemma}{Lemma}
\numberwithin{equation}{section}

\begin{document}
\title[Code uniqueness of 4-ccd clusters in SCM]{Code uniqueness of 4-cell connected disjoint clusters in spatial chromatic model}
\author{Weining Zhu}
\address[ZHU, W. N.]{Laboratory of Spatial Computation and Analysis, Zhejiang University}
\email{zhuwn@zju.edu.cn}
\thanks{This work was supported by the Natural Science Foundation of Zhejiang Province (No. LY17D010005)}
\date{July 17, 2026}
\keywords{spatial chromatic model, chromatic code, cluster, uniqueness}

\begin{abstract}
The spatial chromatic model (SCM) is a novel entity-oriented spatial data model proposed in recent years. Spatial tessellations formed by SCM have been applied in preliminary analyses of spatial point pattern recognition, spatial topology, and Voronoi diagram generation. Unlike previous spatial tessellation models, a key feature of SCM is that each subspace (cell) possesses a chromatic code, which serves as the foundation for various theoretical analyses and applications of SCM. The uniqueness of cell cluster codes in SCM constitutes an important property of the model. Previous studies have demonstrated the uniqueness of the chromatic codes of 2-cell and 3-cell connected disjoint clusters. Building on this analysis of connected cluster code characteristics, this paper proves the uniqueness of the chromatic codes of 4-cell connected disjoint clusters. The proof results and the new concepts introduced in this study help further refine the spatial chromatic model and provide a theoretical foundation for its subsequent applications.
\end{abstract}

\maketitle

\section{Introduction}

Spatial tessellation usually refers to partitioning a 2D or 3D space or spatial set $\boldsymbol{S}$ into many subspaces or subsets $s_{1}$, $s_{2}$, ..., $s_{3} $ without gaps and overlaps. For example, the Earth's surface is a tessellation partitioned naturally into subsets such as continents, islands, lakes, and oceans, or artificially into subsets such as countries and international regions (e.g., the high seas and Antarctica). In mathematics, spatial tessellation is related to set theory, geometry, topology, and discrete mathematics. It is also widely applied in many natural and art sciences, such as computer science (graphics and computational geometry), geographical information science (spatial analysis), chemistry (crystal structure), and graphic design (decorative figures) \cite{Zermout2019}.

Common spatial tessellation models include the following: (1) Raster model. A digital image is a type of spatial tessellation that partitions an image space into many regular pixels, with each pixel being a subspace \cite{Zavattieri1996}. In mathematics, it is a discretization of continuous space. (2) Triangular irregular network (TIN). A TIN partitions space into a number of irregular triangles (e.g., Delaunay triangles) that are connected as a network. TINs are applied in digital elevation models (DEMs), digital cartography, and 3D topographic modeling \cite{Yamada1996}, \cite{Weiler1985}. (3) Voronoi diagrams (VD). A VD is typically a point-based spatial tessellation and has been used in computer graphics (nearest-neighbor spatial queries and spatial indexing), geography (spatial interpolation), astronomy (spatial distribution of galaxies), and neuroscience (cellular structure) \cite{Okabe2000}, \cite{Gold1994}. (4) Penrose tiles. Spatial tessellation in a 2D Euclidean plane is sometimes called plane tiling, in which a subspace consists of one or a few polygons \cite{Penrose1979}. Roger Penrose, the 2020 Nobel prize winner, proposed the famous Penrose tiles, which are a pair of shapes that tile the plane aperiodically.

This study focuses on a new spatial tessellation model, called the \textbf{spatial chromatic model (SCM)}, which is an entity-oriented spatial partition analogous to Voronoi diagrams. Since the subspaces within an SCM is encoded, it is mathematically a discretization of continuous space resembling a raster model. The subspaces in an SCM are identified by their chromatic codes, which consist of sequences of integers and contain richer information about space and entities the row and column numbers used in raster models. Moreover, unlike floating-point coordinates in traditional Euclidean spaces, integer chromatic codes can be processed and computed by computers more efficiently and rapidly. This paper primarily investigates a theoretical property of the model --- the uniqueness of spatial chromatic codes.

\section{Spatial chromatic model: conception and properties}

The SCM is a novel entity-based spatial data model proposed in recent years \cite{Zhu2010}, \cite{Zhu2015}, \cite{Zhu2022}. Its fundamental principle involves mapping the relationships between spatial entities onto corresponding spatial representations, rather than placing entities within a predefined spatial container. The spatial tessellation formed in the SCM is called the spatial chromatic tessellation (SCT). The SCT contains a large number of irregular subspaces called cells. The chromatic codes of these cells can be applied to a variety of entity-space analyses, and in term of these codes, cells can be merged to form many types of spatial tessellations.

Preliminary studies conducted earlier found that SCM has considerable theoretical and practical application value, as demonstrate in the following areas: (1) Spatial topological analysis. Spatial chromatic codes are used to rapidly calculate and analyze chromatic subspaces and the spatial topological relationships between entities. (2) Spatial tessellation generation. Applying different cell-merging rules can generate various types of Voronoi diagrams and other non-Voronoi tessellations. (3) Spatial point pattern recognition. The statistical distribution characteristics of full spatial chromatic codes are used to identify the point pattern characteristics of spatial entities. (4) Spatial competition analysis. By analyzing the statistical characteristics of chromatic codes, the competitive relationships and influence distributions of entities in space can be examined. (5) Refinement of geospatial concepts. The model updates and expands the understanding of concepts such as spatial proximity and spatial heterogeneity involved in the First Law of Geography, and provides related quantitative methods.

\subsection{Generating a standard SCM}

In general, an SCM is generated by mapping entity relationships into a geometric space in which subspaces illustrate the same relationships \cite{Zhu2015}. In this study, we introduce only the standard SCM, which is based on an entity set and half-space partitions. Note that an entity is a physical concept, equivalent to an object in computer science or a point in mathematics. Suppose an point set (or entity set) $\boldsymbol{O} = \{o_{1},o_{2},\ldots ,o_{n}\}$ contains $n$ points, and the index of each point is mapped to an individual color; then the process of generating a 2D chromatic space of the standard SCM from $\boldsymbol{O}$ is as follows.

\begin{itemize}
	
\item \textbf{Step 1}. Use the perpendicular bisector $pb(o_{i},o_{j})$ (also called the dyeing line) of the line segment connecting a point-pair $\{o_{i},o_{j}\}$ to partition the 2D plane into two half-planes, $HP(o_{i})$ and $HP(o_{j})$, with respect to $o_{i}$ and $o_{j}$. Each half-plane is dyed with the color of its corresponding entity, see Fig. 1(b).

\item \textbf{Step 2}. Repeat Step 1 for all point-pairs in set $\boldsymbol{O}$, i.e., for $\binom{n}{2}$ times. The resulting $\binom{n}{2}$ dyeing lines then partition the 2D plane into many half-planes. Through this process, each half-plane is dyed with the colors of different entities, see Fig. 1(c) and 1(d).

\item \textbf{Step 3}. Overlap all half-planes generated in Step 2 to form a type of spatial tessellation containing a large number of subspaces, called \emph{cells}. Each cell has a chromatic code computed by counting how many times the cell was dyed by each entity color, and then arranging these counts in a predefined order, for example, the same order as the index set of the entity set $\boldsymbol{O}$, see Fig. 1(e) and 1(f).

\end{itemize}

\begin{figure}[h]
	\centering
	\includegraphics[width=1.0\linewidth]{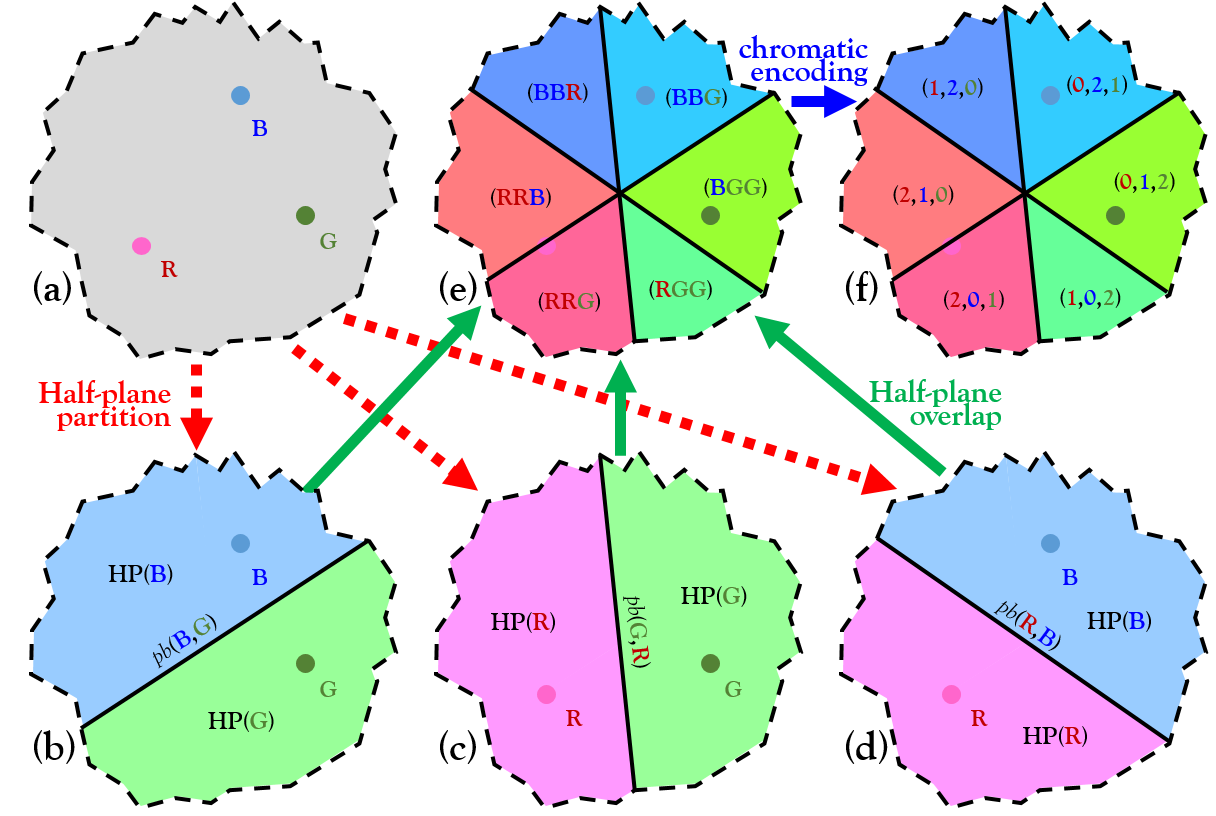}
	\caption{Generating an SCM space from three points. \textbf{(a)} Three points \{R, G, B\} in a 2D plane. \textbf{(b)}, \textbf{(c)}, and \textbf{(d)} Half-plane partitioning and dyeing with respect to the point-pairs \{B, G\}, \{R, G\}, and \{R, B\}, respectively. \textbf{(e)} The chromatic space containing six cells, generated by overlapping the half-planes in (b), (c), and (d). \textbf{(f)} Chromatic encoding: converting the cell labels in (e) to chromatic code; for example, the label (RRB) in (e) become the chromatic code (2, 1, 0) in (f).}
	\label{fig:fig1}
\end{figure}

The SCM and its chromatic space generated by the above three steps are called the \emph{standard} or \emph{ordinary} SCM, chromatic space, or spatial chromatic tessellation. The term ``standard'' or ``ordinary'' means that (1) half-plane partitioning must be applied to all point-pairs, (2) dyeing lines must be the perpendicular bisectors of the point-pairs, and (3) each half-plane must be dyed (labeled) using entity colors (indices). If one or more of these three conditions is changed, other types of spatial chromatic models, for example, the half-SCM \cite{Zhu2010}. In this work, we focus only on the standard SCM.

Fig. 2 shows the chromatic spaces generated from 4 and 6 entities. As the number of entities used to generate the space increases, the number of cells in the chromatic space increases substantially. A cell is an indivisible basic unit in an SCM and is equivalent to a pixel, a Voronoi region, an irregular triangle, or a Penrose polygon in well-known spatial tessellations, or an area in Euclidean space, or a region in geographical space. The greatest difference between a cell and these known basic spatial units is that a cell has an integer chromatic code. Spatial calculations, analyses, and applications of SCM are all based on the chromatic codes of cells. The chromatic code is similar to the integer coordinates $(i,j)$ (rows and columns) in a digital space or real-number coordinates $(x,y)$ in a Euclidean space.

\begin{figure}[h]
	\centering
	\includegraphics[width=1.0\linewidth]{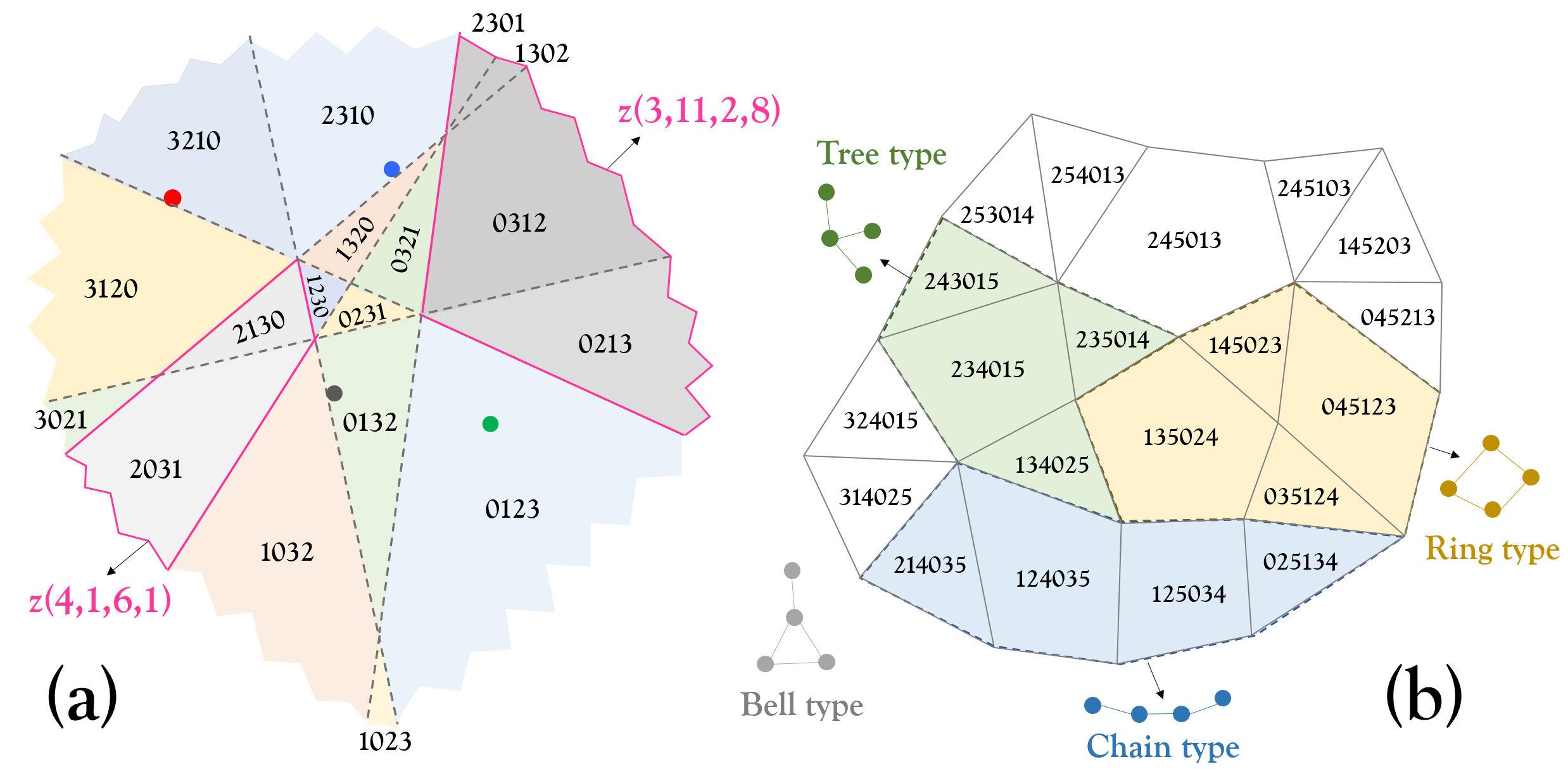}
	\caption{Examples of standard SCMs generated from \textbf{(a)} four entities and \textbf{(b)} six entities. In (a), clusters $z(4,1,6,1)$ and $z(3,11,2,8)$ are merged from two and four cells, respectively. In (b), there are three types of 4-cc clusters --- chain, tree, and ring --- but no bell type. Note that cells in (b) are the homeomorphic part of the whole standard SCM generated from six entities.}
	\label{fig:fig2}
\end{figure}

\subsection{Basic conception and properties of SCM}

We now introduce several definitions and properties of the SCM that are relevant to this study.

\begin{definition}
In an SCM generated by $n$ entities, the cellular chromatic codes are always permutations of the integer sequence $\{0,1,...,n-1\}$, which is called the base code of the cellular chromatic codes. The chromatic code of a cell $c$ is denoted by $c(a_{0},a_{1},\ldots ,a_{i},\ldots ,a_{n-1})$, where each integer $a_{i}$ is called a subcode of the chromatic code. The index $i$ of the subcode $a_{i}$ is called the location of the subcode.
\end{definition}

To highlight the subcode locations, a cell $c$ is sometimes denoted by

\begin{equation}
c\left( 
\begin{array}{cccccc}
l_{0} & l_{1} & \cdots & l_{i} & \cdots & l_{n-1} \\ 
a_{0} & a_{1} & \cdots & a_{i} & \cdots & a_{n-1}
\end{array}
\right) ,
\end{equation}

which indicates that the subcode of $c$ at location $l_{i}$ is $a_{i}$. For example, given the cell $c(1,2,4,0,3,5)$, the subcode at location 0 is 1, and the subcode at location 3 is 0. Note that in an SCM, subcode location indices typically start at 0 rather than 1.

\begin{definition}
The chromatic distance $d(c_{1},c_{2})$ between two cells $c_{1}(a_{10}$, $a_{11}$, $\ldots$, $a_{1,n-1})$ and $c_{2}(a_{21}$, $a_{22}$, $\ldots$, $a_{2,n-1})$ is defined by
\begin{equation}
d(c_{1},c_{2})=\sum\nolimits_{i=0}^{n-1}\left\vert a_{1i}-a_{2i}\right\vert
\end{equation}
\end{definition}

If cells $c_{1}$ and $c_{2}$ are treated as points in a high-dimensional Euclidean space, the chromatic distance $d(c_{1},c_{2})$ equals the $L_{1}$ (Manhattan) distance between them.

\begin{definition}
A subspace formed by merging two or more cells is called a cluster, which is equivalent to the union of those cells. A cluster formed from $m$ cells is called an $m$-cluster, and the $m$ cells are called its component cells.
\end{definition}

The chromatic code of a cluster is the sum of the chromatic codes of its component cells. Given $m$ cells with chromatic codes $c_{1}(a_{10},a_{11},\ldots ,a_{1,n-1})$, $c_{2}(a_{21}$, $a_{22}$, $\ldots$, $a_{2,n-1})$, $\ldots$, and $c_{m}(a_{m0},a_{m1},\ldots ,a_{m,n-1})$, the cluster formed by merging these $m$ cells is denoted by $z\{c_{1},c_{2},\ldots,c_{m}\}$, and its chromatic code is 
\begin{equation}
z\left( \sum\nolimits_{i=1}^{m}a_{i0},\sum\nolimits_{i=1}^{m}a_{i1},\ldots,\sum\nolimits_{i=1}^{m}a_{i,n-1}\right).
\end{equation}

For example, in Fig. 2(a), two cells $c_{1}(2,0,3,1)$ and $c_{2}(2,1,3,0)$ are merged into a cluster $z(4,1,6,1)$, and four cells $c_{1}(2,3,0,1)$, $c_{2}(1,3,0,2)$, $c_{3}(0,3,1,2)$, and $c_{4}(0,2,1,3)$ are merged into a cluster $z(3,11,2,8)$. In Fig. 2(b), four cells $c_{1}(2,1,4,0,3,5)$, $c_{2}(1,2,4,0,3,5)$, $c_{3}(1,2,5,0,3,4)$, and $c_{4}(0,2,5,1,3,4)$ are merged into a chain type 4-cell cluster $z(4,7,18,1,12,18)$.

The chromatic code of a cell is always a permutation of the integer sequence $\{0,1,\ldots ,n-1\}$, and hence, if we swap two subcodes in the chromatic code of a cell, we obtain the chromatic code of another cell. 

\begin{definition}
For a cell $c(a_{0},a_{1},\ldots,a_{i},\ldots,a_{j},\ldots,a_{n-1})$, swapping two subcodes $a_{i}$ and $a_{j}$, regardless of their locations, is called a value transition between $a_{i}$ and $a_{j}$ and is denoted by $TE(a_{i},a_{j})$. Swapping two subcodes at locations $l_{i}$ and $l_{j}$, regardless of their values, is called a location transition in a chromatic code and is denoted by $TL(l_{i},l_{j})$. 	
\end{definition}

Applying $TE(a_{i},a_{j})$ or $TL(l_{i},l_{j})$ to a cell $c$ yields a new cell $c^\prime$, denoted by $c\times TE(a_{i},a_{j})=c^\prime$ or $c\times TL(l_{i},l_{j})=c^\prime$. For example, cell $c(1,0,3,4,5,2)\times TE(2,4) = c^\prime(1,0,3,2,5,4)$, i.e., swapping the subcodes 2 and 4 of $c$ yields the cell $c^\prime$; $c(1,0,3,4,5,2)\times TL(2,4) = c^\prime(1,0,5,4,3,2)$, i.e., swapping the subcodes at locations 2 and 4 (where the subcode values are 3 and 5, respectively) of $c$ yields the cell $c^\prime$. This example also indicates that usually  $c\times TE(a,b) \ne c\times TL(a,b)$.

\begin{definition}
Two cells $c_{1}$ and $c_{2}$ in a 2D plane are called neighboring or adjacent if they share a common segment of a dyeing line; they are then mutual neighbors, or connected, and are denoted by $c_{1}\leftrightarrow c_{2}$.	
\end{definition}

In a 2D Euclidean space, two neighboring cells correspond to two convex polygons sharing a common edge. Given a cell $c_{1}$, all its neighboring cells are collectively analogous to the neighborhood of a point in point-set topology. 

That cell $c_{2}$ is connected to cell $c_{1}$ is sometimes denoted by $c_{1}\rightarrow c_{2}$ and is described as $c_{1}$ adjacently transitioning to $c_{2}$, or as $c_{1}$ taking one step to $c_{2}$. If, through a sequence of adjacent transitions (after many steps), the chromatic code of $c_{1}$ becomes that of $c_{2}$, then $c_{1}$ and $c_{2}$ are called \emph{path-connected}, and all cells involved in these transitions form a \emph{path} between $c_{1}$ and $c_{2}$.

\begin{definition}
In an $m$-cell cluster, if any two component cells are path-connected and all cells along the path are component cells of the cluster, then the $m$-cell cluster is called a connected cluster and is denoted by a $m$-cc cluster.
\end{definition}

\begin{definition}
Two cells $c_{1}$($a_{10}$, $a_{11}$, \ldots, $a_{1i}$, \ldots, $a_{1,n-1}$) and $c_{2}$($a_{20}$, $a_{21}$, \ldots, $a_{2i}$, \ldots, $a_{2,n-1}$) are said to be equal, denoted by $c_{1}=c_{2}$, if their corresponding subcodes are all equal, that is, $a_{1i}=a_{2i}$, for each $i$.
\end{definition}

Therefore, if at any one location their subcodes are not equal, then the two cells are not equal, denoted by $c_{1}\neq c_{2}$. Equality and inequality between two clusters are defined similarly.

\begin{property}
The chromatic distance between two connected cells $c_{1}$ and $c_{2}$ is 2, and the chromatic distances between two equal cells or clusters are both 0, that is,
\begin{equation}
	d(c_{1},c_{2})=2, \text{ if } c_{1} \leftrightarrow c_{2},
\end{equation}
and
\begin{equation}
	d(c_{1},c_{2})=0 \text{ or } d(z_{1},z_{2})=0, \text{ if } c_{1}=c_{2} \text{ or } z_{1}=z_{2}.
\end{equation}
\end{property}

\begin{property}
If the same location transitions $TL(l_{i},l_{j})$ are applied respectively to all component cells of a cluster, then the resulting cells remain their connectivity after these transitions.
\end{property}
This property indicates that if any two component cells $c_{1}\leftrightarrow c_{2}$, then after $c_{1}\times TL(l_{i},l_{j})=c_{1}^\prime$, $c_{2}\times TL(l_{i},l_{j})=c_{2}^\prime$, the two cells $c_{1}^\prime\leftrightarrow c_{2}^\prime$; or if $c_{1}\nleftrightarrow c_{2}$, then $c_{1}^\prime\nleftrightarrow c_{2}^\prime$.

\begin{definition}
If two clusters $z_{1}$ and $z_{2}$ share at least one component cell, that is, at least one component cell of $z_{1}$ equals a component cell of $z_{2}$, then they are called joint, denoted by $z_{1}\cap z_{2}\neq \varnothing $. If $z_{1}\cap z_{2}=\varnothing $, they are disjoint. If two or more $m$-cc clusters are pairwise disjoint, they are called disjoint $m$-cc clusters, denoted as $m$-ccd clusters.
\end{definition}

\begin{property}
In a standard SCM, the chromatic codes of cells are unique; that is, no two cells are equal. The chromatic codes of connected disjoint 2-cell and 3-cell clusters are also unique, that is, no two 2-ccd or 3-ccd clusters are equal.
\end{property}

The above Properties 1, 2, and 3 were proved in \cite{Zhu2010}, \cite{Zhu2015}.

\section{Code uniqueness of 4-ccd clusters}

The uniqueness of the chromatic codes of 2-ccd and 3-ccd clusters has been proved \cite{Zhu2015}. For clusters formed by more than three cells ($m>3$), it remains a conjecture that the chromatic codes of $m$-ccd clusters are also unique. In this section, we prove the uniqueness of the chromatic codes of 4-ccd clusters, namely, as stated in the following theorem.

\begin{theorem}
	If two clusters $z_{1}$ and $z_{2}$ are 4-ccd clusters, then $z_{1} \neq z_{2}$. 
\end{theorem}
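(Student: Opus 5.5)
The plan is to turn the geometric statement into a purely combinatorial one about permutations, then use a locality argument to reduce it to a finite case analysis.

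\textbf{Step 1: encode adjacency as a value swap.} Crossing the dyeing line $pb(o_{i},o_{j})$ moves one half-plane from colour $j$ to colour $i$. So if $c_{1}\leftrightarrow c_{2}$, the codes differ by $+1$ at one location and $-1$ at another. This agrees with Property 1 ($d=2$). Because both codes are permutations of $\{0,\ldots,n-1\}$, this forces the two changed subcodes to be consecutive values $k,k+1$. Hence $c_{2}=c_{1}\times TE(k,k+1)$ for some $k$. Every $4$-cc cluster is therefore a connected $4$-vertex subset of the graph on permutations whose edges are value transitions of consecutive values. In the inverse picture this is the permutohedron, which is bipartite. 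I will prove the stronger combinatorial claim that two disjoint connected $4$-subsets of this graph never have equal sums. Only necessary conditions on adjacency are used, so the combinatorial claim implies the theorem.

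\textbf{Step 2: classify the clusters.} The induced connected graph on $4$ vertices is a chain (path), a tree (star), or a ring ($4$-cycle). Triangles are impossible because the graph is bipartite, which also rules out any triangle-based shape. A ring arises only from two commuting transitions $TE(k,k+1)$ and $TE(l,l+1)$ with $\{k,k+1\}\cap\{l,l+1\}=\varnothing$. For each shape, write the cluster as one base cell $c$ together with at most three consecutive-value transitions. Then
\[
z=4c+\Delta ,
\]
where $\Delta$ is supported on at most six locations and has entries in $\{-3,\ldots,3\}$.

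\textbf{Step 3: localise, and compare the two clusters.} Any two cells of one cluster are at most three steps apart. So at each location the four subcodes lie in an interval of length at most $3$, and $z_{i}/4$ lies within $3/2$ of $c(i)$ for every component cell $c$.

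Now suppose $z_{1}=z_{2}$. Take a component $c$ of $z_{1}$ and a component $c^{\prime}$ of $z_{2}$. At every location, $c(i)$ and $c^{\prime}(i)$ are both within $3/2$ of the common value $z_{i}/4$, so $|c(i)-c^{\prime}(i)|\le 3$ for all $i$.

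Next I will show that $c$ and $c^{\prime}$ agree outside a bounded set of locations. Call a location \emph{frozen} in $z_{1}$ if all four cells of $z_{1}$ carry the same subcode there. At most six locations are non-frozen in each cluster, so outside at most twelve locations both clusters are constant, with values $z_{i}/4$; hence $c$ and $c^{\prime}$ agree there. By Property 2, a common location transition preserves connectivity and equality of sums. This lets me relabel locations so that the active locations come first. I expect the active region to involve at most about six consecutive values.

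\textbf{Step 4: finite case check.} We are left with a finite problem: pairs of shapes (chain, star, ring) with small active permutations. For each pair of shapes I match $\Delta$'s. Some quantities are determined by $z$ alone:
\begin{itemize}
\item the residues $z_{i}\bmod 4$;
\item the sum of squares $\sum_{i} z_{i}^{2}$;
\item the locations where $z_{i}$ attains its extreme values.
\end{itemize}
These should determine which transitions were used and in which order. Together with disjointness, that forces $z_{1}=z_{2}$ to share a cell, giving a contradiction.

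This matching step is the main obstacle. Chains and stars built from overlapping transitions such as $TE(k,k+1)$ and $TE(k+1,k+2)$ can have similar $\Delta$'s. The $4$-cycle case is subtle because its sum is symmetric. I would attack it first by fixing the multiset of transitions used and solving $\Delta_{1}-\Delta_{2}=4(c^{\prime}-c)$, noting that the right-hand side is divisible by $4$ at every location. That divisibility constraint should kill most configurations immediately. Whatever survives would be checked exhaustively, with the $3$-cell case (Property 3) used to discard sub-configurations.
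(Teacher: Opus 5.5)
\textbf{Gap: the proof itself (Step 4) is never carried out, and the mechanism you sketch for it cannot work.} Your Steps 1--3 are a sound reformulation. They essentially reproduce the paper's set-up: adjacency as a swap of consecutive values, the chain/tree/ring trichotomy, static versus kinetic subcodes, the window of Lemma 1, and the mod-$4$ divisibility behind the r-codes of Property 4. There are two small slips. First, the $3/2$ bound needs the fact that the four values at a location form a run of consecutive integers, not merely an interval of length $3$. Second, with an arbitrary base cell $\Delta$ is not confined to $\{-3,\ldots,3\}$: for the chain $z_1$ below with $c=(0,1,2,3)$ one gets $\Delta=(6,-3,-2,-1)$.

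The theorem lives entirely in Step 4, and that step consists of ``should'', ``I expect'' and ``would be checked exhaustively''. Even the reduction to a finite check is incomplete. The number $n$ is unbounded, and the kinetic values of a single cluster can lie in three separated transition intervals, as in $(a\oplus2\leftrightarrow1)(b\oplus2\leftrightarrow1)(e\oplus2\leftrightarrow1)$. So ``about six consecutive values'' is false. You need an explicit normalisation of values. For example, compress the values on the at most $12$ active locations order-preservingly. This map is a translation on the run of values each location takes across the eight cells, so it preserves sums, adjacency and disjointness. The paper's counterpart is Lemma 1 together with Property 5.

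More seriously, the mechanism you propose cannot close the case check, because the code does not determine the cluster. Take the chains $z_1=\{(0,1,2,3),(1,0,2,3),(2,0,1,3),(3,0,1,2)\}$ and $z_2=\{(2,1,0,3),(2,0,1,3),(1,0,2,3),(1,0,3,2)\}$, where $z_2$ is $z_1$ after $TL(0,2)$. They are distinct, both have code $(6,1,6,11)$, and both use $TE(0,1),TE(1,2),TE(2,3)$ in that order. So recovering ``which transitions were used and in which order'' from $z$ does not pin down the cells, and your final inference to a shared cell has nothing behind it.

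What must be shown is the weaker statement that equal codes force a common cell. That needs a structural argument rather than invariant matching, and the paper supplies one by comparing seed cells:
\begin{itemize}
\item If the bases differ, say $a<s$, then $z_s$ has the static subcode $4a$ at the location of $a$. No subcode of $z_a$ equals $4a$: kinetic locations carry non-constant values $\ge a$, and static ones carry $4w$ with $w\neq a$, because $a$ itself is kinetic.
\item If the bases coincide, the location permutation relating the two seeds must fix the t-code, so it can only exchange repeated entries. By inspection of the generated family, these sit at the locations of the first or second transition of the l-code. That makes a cell of one cluster a component of the other; in the example above, $(1,0,2,3)$ and $(2,0,1,3)$ are shared.
\end{itemize}
Without an argument of this kind, or an actual exhaustive verification, your proposal does not prove the theorem.
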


The main idea of our proof is to first analyze the features of all possible 4-cc clusters and then prove that if two 4-cc clusters are disjoint, their chromatic codes are not equal; if the chromatic codes of two 4-cc clusters are equal, then the two clusters must be joint.

\subsection{Connection types and subcode transitions of 4-cc clusters}

According to the definition of a 4-cc cluster, it contains four connected cells $c_{1}$, $c_{2}$, $c_{3}$, and $c_{4}$. The chromatic codes of these four cells must undergo three swaps of two subcodes, and their connections are illustrated in Fig. 2(b). At most six subcodes, $a_{0}$, $a_{1}$, $a_{2}$, $a_{3}$, $a_{4}$, and $a_{5}$, are involved in the three swaps, which are as follows: (1) starting from cell $c_{1}$, swap its subcode $a_{0}$ at location $l_{0}$ with the subcode $a_{1}$ at location $l_{1}$, denoted by $c_{1}\rightarrow c_{2}$ (or $c_{1}$ walks to $c_{2}$); (2) starting from cell $c_{2}$, swap its subcode $a_{2}$ at location $l_{2}$ with the subcode $a_{3}$ at location $l_{3}$, denoted by $c_{2}\rightarrow c_{3}$ (or $c_{2}$ walks to $c_{3}$); (3) starting from cell $c_{3}$, swap its subcode $a_{4}$ at location $l_{4}$ with the subcode $a_{5}$ at location $l_{5}$, denoted by $c_{3}\rightarrow c_{4}$ (or $c_{3}$ walks to $c_{4}$). This sequence of walks from $c_{1}$ to $c_{4}$ is written as
\begin{gather}\label{eq:walks}
c_{1}\left( 
\begin{array}{ccccccccccccc}
\cdots & l_{0} & \cdots & l_{1} & \cdots & l_{2} & \cdots & l_{3} & \cdots & 
l_{4} & \cdots & l_{5} & \cdots \\ 
\cdots & a_{0} & \cdots & a_{1} & \cdots & a_{2} & \cdots & a_{3} & \cdots & 
a_{4} & \cdots & a_{5} & \cdots%
\end{array}%
\right) \\
\times TE(a_{0},a_{1})\rightarrow  \notag \\
c_{2}\left( 
\begin{array}{ccccccccccccc}
\cdots & l_{0} & \cdots & l_{1} & \cdots & l_{2} & \cdots & l_{3} & \cdots & 
l_{4} & \cdots & l_{5} & \cdots \\ 
\cdots & a_{1} & \cdots & a_{0} & \cdots & a_{2} & \cdots & a_{3} & \cdots & 
a_{4} & \cdots & a_{5} & \cdots%
\end{array}%
\right)  \notag \\
\times TE(a_{2},a_{3})\rightarrow  \notag \\
c_{3}\left( 
\begin{array}{ccccccccccccc}
\cdots & l_{0} & \cdots & l_{1} & \cdots & l_{2} & \cdots & l_{3} & \cdots & 
l_{4} & \cdots & l_{5} & \cdots \\ 
\cdots & a_{1} & \cdots & a_{0} & \cdots & a_{3} & \cdots & a_{2} & \cdots & 
a_{4} & \cdots & a_{5} & \cdots%
\end{array}%
\right)  \notag \\
\times TE(a_{4},a_{5})\rightarrow  \notag \\
c_{4}\left( 
\begin{array}{ccccccccccccc}
\cdots & l_{0} & \cdots & l_{1} & \cdots & l_{2} & \cdots & l_{3} & \cdots & 
l_{4} & \cdots & l_{5} & \cdots \\ 
\cdots & a_{1} & \cdots & a_{0} & \cdots & a_{3} & \cdots & a_{2} & \cdots & 
a_{5} & \cdots & a_{4} & \cdots%
\end{array}%
\right),  \notag
\end{gather}
where the ellipses indicate the locations at which the subcodes are not swapped; these are called \emph{static subcodes}, and the remaining subcodes that are involved in swaps are called \emph{kinetic subcodes} of a cell or a cluster. For example, the chromatic code of cell $c_{4}$ can be written as
\begin{equation}
c_{4}\left( 
\begin{array}{cccccc}
l_{0} & l_{1} & l_{2} & l_{3} & l_{4} & l_{5} \\ 
a_{1} & a_{0} & a_{3} & a_{2} & a_{5} & a_{4}
\end{array}
\right) \cup \left( 
\begin{array}{c}
L_{sta} \\ 
A_{sta}
\end{array}
\right),
\end{equation}
where the first and second parts in brackets denote the kinetic and static codes, respectively. In some cases, the static codes may be omitted.

Accordingly, the cluster $z$ merged from the four cells in Eq.(\ref{eq:walks}) has the chromatic code partitioned by the kinetic and static codes as the following,
\begin{equation}
z\left( 
\begin{array}{cccccc}
l_{0} & l_{1} & l_{2} & l_{3} & l_{4} & l_{5} \\ 
a_{0}+3a_{1} & a_{1}+3a_{0} & 2a_{2}+2a_{3} & 2a_{3}+2a_{2} & 3a_{4}+a_{5} & 
3a_{5}+a_{4}
\end{array}
\right) \cup \left( 
\begin{array}{c}
L_{sta} \\ 
A_{sta}
\end{array}
\right).
\end{equation}

According to \textbf{Property 1}, value transitions between two connected cells only swap two integers $a$ and $a+1$; therefore, the kinetic codes of a 4-cc cluster, namely $a_{0}$ to $a_{5}$ in Eq. (\ref{eq:walks}), are limited to the following cases: (1) $\left\vert a_{1}-a_{0}\right\vert =1$, $\left\vert a_{3}-a_{2}\right\vert =1$, and $\left\vert a_{5}-a_{4}\right\vert =1$; (2) for a subcode $a$, it can swap with either $a+1$ or $a-1$; (3) a value transition can be applied twice along a path, provided that the two applications are not consecutive. For example, by $TE(0,1)$, the cell $c_{1} \rightarrow c_{2}$. If apply immediately $TE(0,1)$ again, then $c_{2} \leftarrow c_{1}$ without connecting to a new cell. However, if we apply $TE(2,3)$ to $c_{2}$, making it connected to $c_{3}$, and then apply $TE(0,1)$ to $c_{3}$, the latter will connect to a new cell $c_{4}$ instead of walking back to $c_{1}$.

Based on the above three limitations, the value transitions in a 4-cc cluster must be as follows:

\begin{enumerate}
	\item Three subcodes swapped. They must be $a$, $a+1$, and $a+2$ (three consecutive integers), denoted by ($a\oplus3\leftrightarrow3$), meaning that the swaps occurred three times ($\leftrightarrow3$) among the three consecutive integers starting from $a$ ($a\oplus3$).
	\item Four subcodes swapped. They must be $a$, $a+1$, $b$, and $b+1$ ($a+1<b$). If $b=a+2$, the four subcodes are $a$, $a+1$, $a+2$, and $a+3$. This case is denoted by ($a\oplus2\leftrightarrow1)(b\oplus2\leftrightarrow2$), ($a\oplus2\leftrightarrow2)(b\oplus2\leftrightarrow1$) or ($a\otimes4\leftrightarrow3$).
	\item Five subcodes swapped. They must be $a$, $a+1$, $a+2$, $b$, and $b+1$ ($b>a+3$) or $a$, $a+1$, $b$, $b+1$, $b+2$ ($b>a+2$). If $b=a+3$ or $b=a+2$, then the five subcodes are $a$, $a+1$, $a+2$, $a+3$, $a+4$, i.e., five consecutive integers. This case is denoted by ($a\oplus3\leftrightarrow2)(b\oplus2\leftrightarrow1$), ($a\oplus2\leftrightarrow1)(b\oplus3\leftrightarrow2$) or ($a\oplus5\leftrightarrow3$).
	\item Six subcodes swapped. They must be $a$, $a+1$, $b$, $b+1$, $e$, and $e+1 $ ($b>a+2$ and $e>b+2$). If $b=a+2$, then the six subcodes are $a$, $a+1$, $a+2$, $a+3$, $e$, and $e+1$. If $e=b+2=a+4$, then the six subcodes are $a$, $a+1$, $a+2$, $a+3$, $a+4$, and $a+5$, i.e., six consecutive integers. If $b>a+2$ and $e=b+2$, then the six subcodes are $a$, $a+1$, $b$, $b+1$, $b+2$, and $b+3$. This case is denoted by ($a\oplus2\leftrightarrow1)(b\oplus2\leftrightarrow1)(e\oplus2\leftrightarrow1$), ($a\oplus4\leftrightarrow2)(b\oplus2\leftrightarrow1$), ($a\oplus2\leftrightarrow1)(b\oplus4\leftrightarrow2$), or ($a\oplus6\leftrightarrow3$). There is no such case that $a$, $a+1$, $a+2$, $b$, $b+1$, $b+2$, because the swaps occur only three times.  	
\end{enumerate}

A consecutive integer interval over which subcodes are involved in swaps is called an \emph{transition interval}. Subcode swaps occur only within each transition interval, and subcodes in different transition intervals do not swap with each other. In a 4-cc cluster, there are only ten types of subcode transition intervals, as listed in the above four cases, such as ($a\oplus3\leftrightarrow3$), ($a\oplus2\leftrightarrow1)(b\oplus3\leftrightarrow2$), and ($a\oplus2\leftrightarrow1)(b\oplus2\leftrightarrow1)(e\oplus2\leftrightarrow1$).

If each cell is shrunk to a node and neighboring cells are connected by an edge, then a 4-cc cluster is converted into a connected graph, as shown in Fig. 2(b). Such a graph can be of tree type, chain type, and ring type (a chain type in which the head and tail are connected). Note that because no three cells in an SCM form a ring, the bell type 4-cc clusters (see an example in Fig. 2(b)) do not occur.

The above four cases and three connection types demonstrates that, in a 4-cc cluster, the subcode transitions and neighbor relations among the four component cells are diverse and complicated. We first analyze the case in which the transition interval is an integer sequence containing six consecutive integers $(a$, $a+1$, $a+2$, $a+3$, $a+4$, $a+5)$; the cell bears the six integers is called the \emph{original cell} or \emph{seed cell}. Since subcode value transitions occur only within the six subcodes, we call them the \emph{transition codes} or \emph{t-code} of the cell, in which $a$ is the minimum and is called the \emph{base} of transition codes. If the base is omitted, the t-code $(a$, $a+1$, $a+2$, $a+3$, $a+4$, $a+5)$ is usually shorten to $(0,1,2,3,4,5)$.

Since the base of t-codes is $a$, the original cell must undergo a value transition $TE(a, a+1)$, called the \emph{minimal transition}. That is, applying $TE(a, a+1)$ to the original cell $(a$, $a+1$, $a+2$, $a+3$, $a+4$, $a+5)$ connects it to the neighboring cell $(a+1$, $a$, $a+2$, $a+3$, $a+4$, $a+5)$, i.e., $(0,1,2,3,4,5)\rightarrow (1,0,2,3,4,5)$ (see the example in Fig. 3, where $c_{0}\rightarrow c_{1}$). Then, applying $TE(2,3)$ gives $c_{1} \rightarrow c_{2}(1,0,3,2,4,5)$, and applying $TE(4,5)$ gives $c_{2} \rightarrow c_{3}(1,0,3,2,5,4)$. The four cells along the connection path $c_{0} \rightarrow c_{1} \rightarrow c_{2} \rightarrow c_{3}$ merge into a 4-cc cluster $z$ with t-code $(4a+3$, $4a+1$, $4a+10$, $4a+10$, $4a+17$, $4a+19)\cup (A_{sta})$. If we omit $4a$ in the kinetic code as well as the static code $A_{sta}$, the t-code of $z$ becomes $ (3,1,10,10,17,19)$. The similar t-codes of other 4-cc clusters are shown in Fig. 3.

As illustrated by cluster $z$ above, starting from an original cell $c_{0}$ and proceeding through successive transitions (path connections), the resulting cluster $z$ is called a \emph{generated cluster} of $c_{0}$. The set of all generated from $c_{0}$ is called the \emph{generated cluster family} of $c_{0}$ (see all generated 4-cc clusters in Fig. 3).

\begin{figure}[h]
	\centering
	\includegraphics[width=1.0\linewidth]{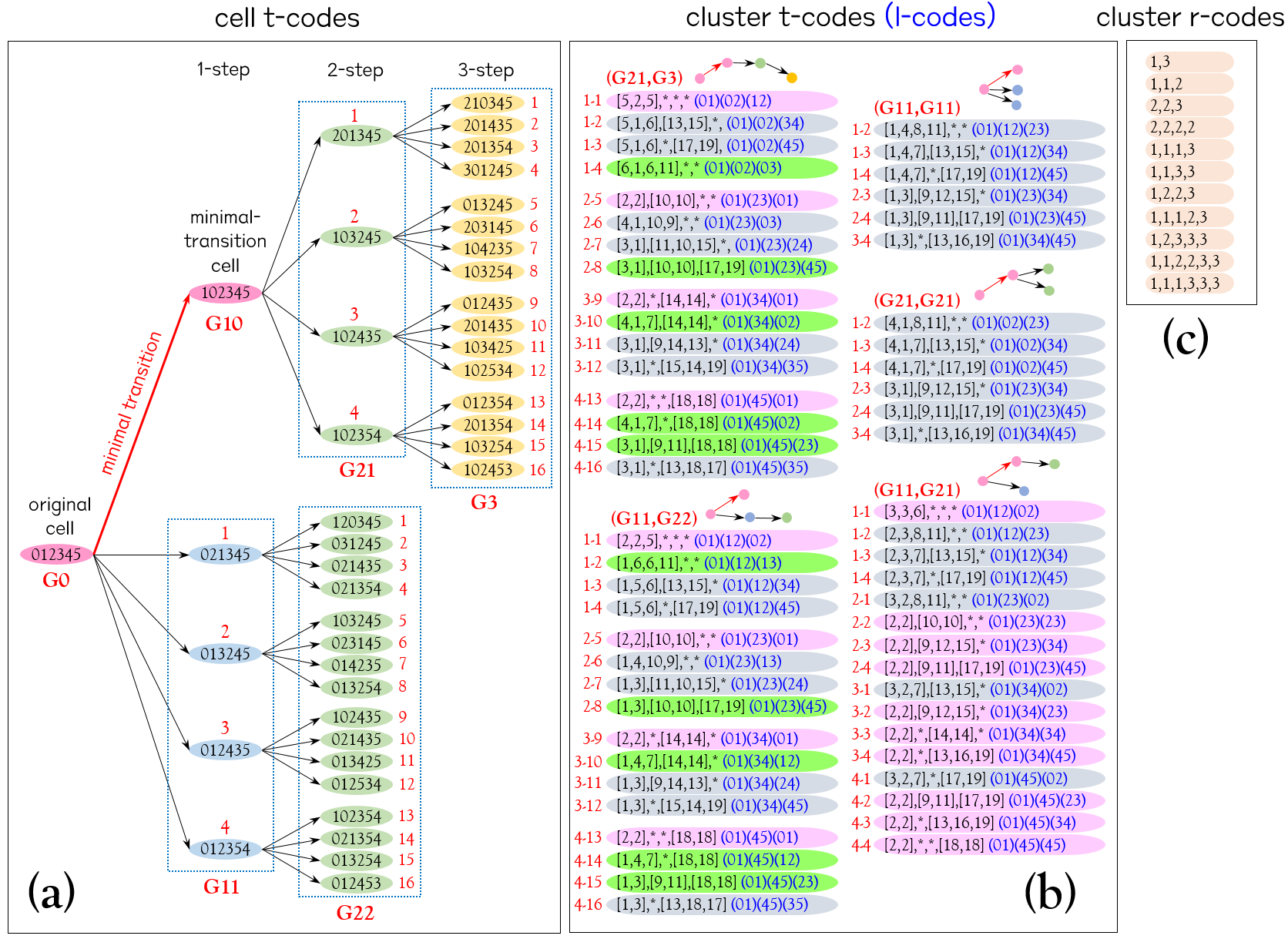}
	\caption{The 4-cc cluster family generated from the original cell $c_{0}(0,1,2,3,4,5)$. \textbf{(a)} The neighboring cells and their groups of $c_{0}$. \textbf{(b)} The t-codes and l-codes of the generated 4-cc clusters. \textbf{(c)} The residue codes (r-codes) of the t-codes.}
	\label{fig:fig3}
\end{figure}

The t-code $(0,1,2,3,4,5)$ contains six integers, and therefore only five types of value transitions can occur between neighboring cells, namely, $TE(0,1)$, $TE(1,2)$, $TE(2,3)$, $TE(3,4)$, and $TE(4,5)$. Each cell can thus connect to at most five neighboring cells. Fig. 3 shows the original cell $c_{0}(0,1,2,3,4,5)$, the cell $c_{1}(1,0,2,3,4,5)$, which is connected to $c_{0}$ by the minimal transition, and all neighboring cells that are path-connected to either $c_{0} $ or $c_{1}$.

The value transitions occurred three times between the component cells of a 4-cc cluster, and the minimal transition between $c_{0}$ and $c_{1}$ must be included. Therefore, the three value transitions correspond to the three steps shown in Fig. 3. These steps are analogous to the growth of a tree from its root (the original or seed cell). We categorize these cells into several \emph{neighboring groups} as follows:
\begin{itemize}
	\item \textbf{G0}: the original cell $c_{0}(0,1,2,3,4,5)$ only.
	\item \textbf{G10}: the minimal-transition cell $c_{1}(1,0,2,3,4,5)$ only.
	\item \textbf{G11}: the cells that are connected to $c_{0}$, excluding $c_{1}$.
	\item \textbf{G21}: the cells that are connected to the cells in \textbf{G10}.
	\item \textbf{G22}: the cells that are connected to the cells in \textbf{G11}.
	\item \textbf{G3}: the cells that are connected to the cells in \textbf{G21}.
\end{itemize}

Starting from groups \textbf{G0} and \textbf{G10}, the cell connections that occur in a 4-cc cluster can take the following forms:
\begin{itemize}
	\item (\textbf{G21}, \textbf{G3}): connecting a cell in \textbf{G21} and then a cell in \textbf{G3}.
	\item (\textbf{G11}, \textbf{G21}): connecting a cell in \textbf{G11} and then a cell in \textbf{G21}.
	\item (\textbf{G11}, \textbf{G22}): connecting a cell in \textbf{G11} and then a cell in \textbf{G22}.
	\item (\textbf{G11}, \textbf{G11}): connecting two cells in \textbf{G11}.
	\item (\textbf{G21}, \textbf{G21}): connecting two cells in \textbf{G21}.
\end{itemize}

Note that a few cells appear in more than one neighboring group. For example, $(0,1,2,3,5,4)$ is in both \textbf{G1} and \textbf{G3}, and $(1,0,2,3,5,4)$ is in both \textbf{G21} and \textbf{G22} (see Fig. 3).

Starting from the original cell $(0,1,2,3,4,5)$ and applying the minimal transition $TE(0,1)$ together with all possible path connects, we list the generated 4-cc cluster family and their t-codes in Fig. 3.

\subsubsection{Features of chromatic codes of 4-ccd clusters}

The t-code of a 4-cc cluster is computed by summing the t-codes of all its component cells. For example, summing the t-codes of the following four cells $c_{1}$, $c_{2}$, $c_{3}$, and $c_{4}$, yields a cluster $z$ with t-code $(5,2,5,12,16,20)$.
\begin{eqnarray} \label{eq:tcode_example}
&&c_{0}(0,1,2,3,4,5) \\
&&c_{1}(1,0,2,3,4,5)  \notag \\
&&c_{2}(2,0,1,3,4,5)  \notag \\
&&\underline{c_{3}(2,1,0,3,4,5)}  \notag \\
&&z(5,2,5,12,16,20)  \notag
\end{eqnarray}

Since a value transition $TE(a,b)$ always corresponds to a location transition $TL(s,t)$, then the t-code generated by three value transitions in a 4-cc cluster also corresponds to three location transitions. For example, the above t-code (5,2,5,12,16,20) is obtained through three value transitions $TE(0,1)$, $TE(1,2)$, and $TE(0,1)$, which correspond to location transitions $TL(0,1)$, $TL(0,2)$, and $TL(1,2)$, respectively. We rewrite the three location transitions as  $(0,1)(0,2)(1,2)$, and call this notation the \emph{location code} or \emph{l-code} of the cluster. The l-codes for the generated 4-cc cluster family are shown in Fig. 3. Each component of the l-code $(0,1)(0,2)(1,2)$, such as $(0,1)$,  is called a subcode of the l-code, and the order of these subcodes within the l-code is not fixed. 

We can see that, among the three value transitions in Eq.(\ref{eq:tcode_example}), the three subcodes $(3,4,5)$ remain unchanged, indicating they are static subcodes. Therefore, we rewrite the t-code of $z$ as $(5,2,5,\ast ,\ast ,\ast )$, where each $\ast $ denotes a static subcode position, and further abbreviate it to $(5,2,5)$.

Although we are analyzing the case of six consecutive integers $(a$, $a+1$, 
$a+2$, $a+3$, $a+4$, $a+5)$, other cases involving shorter t-codes and more transition intervals are also covered by the same t-codes in Fig. 3:

\begin{itemize}
	\item Cases in which 3, 4, 5, or 6 subcodes are swapped. In Fig. 3, static subcodes are labeled by $\ast $; therefore, the number of integers remaining in a cluster t-code equals the number of subcodes swapped in the three value transitions.
	\item Cases of connected graph types. In Fig. 3, the connection groups (\textbf{G1}, \textbf{G1}), (\textbf{G21}, \textbf{G21}) correspond to the tree type in Fig. 2(b). The connections of (\textbf{G21}, \textbf{G3}), (\textbf{G1}, \textbf{G22}), and (\textbf{G1}, \textbf{G21}) correspond to the chain and ring types in Fig. 2(b). In some chain-type clusters, the last cell after three value transitions is identical to $c_{0}$, for example, the cell $(0,1,2,3,4,5)$ occurs again in group \textbf{G3}, and therefore a ring-type cluster is formed.
	\item Cases of different transition intervals and bases. The kinetic subcode values and static subcode locations of cluster t-codes indicate the corresponding transition intervals. For example, $(5,2,5,\ast ,\ast ,\ast )$ and $(3,3,6,\ast ,\ast,\ast )$ correspond to the $(a\oplus3\leftrightarrow3)$ interval. The t-codes $(4,1,10,9,\ast,\ast)$ and $(2,2,\ast ,14,14,\ast )$ correspond to the $(a\oplus4\leftrightarrow3)$ and $(a\oplus2\leftrightarrow2)(b\oplus2\leftrightarrow1)$ intervals, respectively. It can be observed that subcode values vary significantly across different intervals, whereas within the same interval they exhibit minimal variation and may even be identical in some cases. The kinetic subcodes on either side of one or more static subcodes belong to different transition intervals. The transition intervals contained in each 4-cc cluster t-code are indicated by the square brackets in Fig. 3.  
\end{itemize}

More precisely, a t-code of a cell can be written as 
\begin{equation}\label{eq:PreciseTcode}
(a, a+1, b+a+2, b+a+3, e+a+4, e+a+5),
\end{equation}
where $a$, $b$, and $e$ are nonnegative integers and $e\ge b$. If $b=e=0$, the t-code reduces to the case of six consecutive integers. 	

Given the t-code $(t_{1},t_{2},\ldots,t_{n})$ of an $m$-cc cluster, we can compute its congruence code as follows:
\begin{equation}\label{eq:CongruenceCode}
	(t_{1} (\text{mod }m),t_{2} (\text{mod }m),\ldots,t_{n} (\text{mod }m)).
\end{equation}
For example, according to Eq. (\ref{eq:CongruenceCode}), the congruence code of the 4-cc cluster t-code $(5,2,5,\ast,\ast,\ast)$ is $(1,2,1,0,0,0)$. Since the subcodes at the $\ast$ locations are static subcodes, their chromatic codes must be of the form $4\ast$; hence their residues modulo 4 must be 0. Removing the zeros from $(1,2,1,0,0,0)$ and sorting the remaining integers in ascending order yields a new code $(1,1,2)$, which is called the \emph{residue code} or \emph{r-code} of the 4-cc cluster. The r-codes of all 4-cc clusters are shown in Fig. 3(c). Note that the r-codes in Fig. 3(c) are computed from the original cell with six consecutive integers. If the t-codes of the original cell takes the form given in Eq. (\ref{eq:PreciseTcode}), the generated clusters have the same r-codes as those in Fig. 3(c). 

\begin{property}
If the r-codes of two 4-cc clusters are not equal, then their chromatic codes are also not equal.
\end{property}

\begin{proof}
	First, it is easy to prove the following relationship
	\begin{equation} \label{eq:amodb}
		a\not\equiv b \text{ (mod 4)}\Rightarrow a\neq b,
	\end{equation}
	that is, if two integers have different residues modulo 4, then the two integer are not equal. Therefore, if the chromatic codes of two 4-cc clusters are equal, their r-codes must be equal. Conversely, if their r-codes are not equal, then by Eq. (\ref{eq:amodb}), their subcodes cannot all be equal, and hence their chromatic codes are not equal. 
\end{proof}

For example, the r-codes of clusters $z_{1}$ and $z_{2}$ are $(1,3)$ and $(1,1,2)$, respectively, so $z_{1}$ and $z_{2}$ are not equal. The number 3 in the r-code base of $z_{1}(1,3)$ indicates a location where the subcode of $z_{1}$ modulo 4 is 3. In contrast, no such subcode exists in $z_{2}$, because the r-code of $z_{2}$ is $(1,1,2)$, indicating that there are two subcodes congruent to 1 modulo 4, one subcode congruent to 2 module 4, and the remaining subcodes congruent to 0 modulo 4, but no subcode congruent to 3 modulo 4.

If the r-codes of two clusters are not equal, then the two clusters are not equal. However, if their r-codes are equal, the two clusters may or may not be equal. Therefore, it remains only to prove that any two 4-ccd clusters with the same r-codes are also not equal. This is equivalent to proving that if two 4-cc clusters have both the same chromatic code and r-code, they cannot be 4-ccd clusters, that is, they must be joint.

Note that the t-codes in Fig. 3 are only those generated from the original cell, and the number of their types are $k_{1}=60$ (including a few repeated cases). In a chromatic space, any cell can serve as the original cell, and the t-code of the new original cell is a permutation of the t-code $(0,1,2,3,4,5)$, and therefore the number of such types is $k_{2}=6!=720$. Moreover, the t-code base $a$ ranges from a minimum 0 to a maximum $n-6$, giving $k_{3}=n-5$ possible types. Given a chromatic space generated from $n>6$ entities, the number of types of chromatic codes of 4-cc clusters is $k=k_{1}\times k_{2}\times k_{3}=60\times 720\times (n-5)$. Because of the repeated cases in Fig. 3 and because some cells are hidden in the 2D plane \cite{Zhu2015}, the actual number of 4-cc clusters that can emerge in 2D space will be much lower than $k$.

\subsection{Proof of code uniqueness of 4-ccd clusters}

We now prove that, among the above $k$ types of 4-cc clusters, any two 4-ccd clusters $z_{a}$ and $z_{s}$ who are generated respectively from two original cells with the following t-codes are not equal.
\begin{eqnarray} \label{eq:ca0cs0}
&&c_{a0}\left( 
\begin{array}{cccccc}
l_{a0} & l_{a1} & l_{a2} & l_{a3} & l_{a4} & l_{a5} \\ 
a & a+1 & a+2 & a+3 & a+4 & a+5
\end{array}
\right) \cup \left( 
\begin{array}{c}
l_{ax} \\ 
x
\end{array}
\right) \\
&&c_{s0}\left( 
\begin{array}{cccccc}
l_{s0} & l_{s1} & l_{s2} & l_{s3} & l_{s4} & l_{s5} \\ 
s & s+1 & s+2 & s+3 & s+4 & s+5
\end{array}
\right) \cup \left( 
\begin{array}{c}
l_{sy} \\ 
y
\end{array}
\right)  \notag
\end{eqnarray}

The proof considers three cases for the relationship between $c_{a0}$ and $c_{s0}$:
\begin{enumerate}
	\item \textbf{Case (1)}: $a=s$ and $l_{ai}=l_{si}$.
	\item \textbf{Case (2)}: $a\neq s$.
	\item \textbf{Case (3)}: $a=s$ but $l_{ai}\neq l_{si}$.	 
\end{enumerate}
 
\textbf{Case (1)}. This case corresponds to $c_{a0}=c_{s0}$, meaning that the clusters $z_{a}$ and $z_{s}$ are generated from the same original cell $c_{a0}$. For example, all clusters in Fig. 3 are generated from the cell $(0,1,2,3,4,5)$. We can see that some generated clusters are not equal because their r-codes differ, whereas other clusters share the same t-codes. For example, there are three clusters with t-code $(2,2,10,10)$ in Fig. 3. However, these clusters are not disjoint, because any two of them are joint with both the original cell $c_{a0}$ and its minimal-transition cell $c_{a1}$. Therefore, clusters in this case are excluded from our proof, since they are not 4-ccd clusters.

\textbf{Case (2)}. This case corresponds to the two original cells having different bases. Given $a\neq s$, we may assume without loss of generality that $s>a$. The locations $l_{ai}$ and $l_{si}$ may or may not be equal. According to the definition of t-code base, $s$ is the minimal value of the subcode of t-code, since $a<s$, therefore among the generated clusters of $c_{s0}$, the subcode $a$ has not involved in value transitions (it is a static code). Suppose the location of the subcode $a$ in $c_{s0}$ is $l_{sa}$, then we consider the following cases:  

\textbf{Case (2-1)}. $l_{sa} = L_{a}$, that is, $l_{sa}$ equals one of the locations $l_{a0},\ldots, l_{a5}$ in Eq. (\ref{eq:ca0cs0}). At location $l_{sa}$, the chromatic subcode of the generated cluster $z_{s}$ is $4a$. However, at the locations $L_{a}$, the minimum subcode value of the generated cluster $z_{a}$ is $4a+1$, which occurs at location $l_{a0}$, because subcode $a$ has been swapped with $a+1$ at least once.  

\textbf{Case (2-2)}. $l_{sa} \ne L_{a}$, that is, $l_{sa}$ equals one of the locations $l_{ax}$ in Eq. (\ref{eq:ca0cs0}), at which the subcode of $c_{a0}$ is a static subcode. At any such location $l_{ax}$, the subcode of $c_{a0}$ is either $\ge a+6$ or $\le a-1$, and consequently, the subcodes of the generated cluster $z_{a}$ at these locations are either $\ge 4a+24$ or $\le 4a-4$.  

 According to above two cases, $z_{s}$ contains a subcode equal to $4a$, whereas no such subcode exists in $z_{a}$, because all subcodes of $z_{a}$ are either greater than $4a$ or less than $4a$. Therefore, $z_{a}\neq z_{s}$.

\textbf{Case (3)}. In this case, $a=s$ but $l_{ai}\ne l_{si}$, meaning that in Eq. (\ref{eq:ca0cs0}), there is at least one index $i$ such that $l_{ai} \ne l_{si}$. According to \textbf{Property 2}, we can permute $c_{a0}$ and $c_{s0}$ respectively so that both have the same order as the original cell in Fig. 3, and these permutations (location transitions) do not change the connectivity among the component cells of the generated 4-cc cluster. Therefore, we consider only the t-codes of $c_{a0}$ and $z_{a}$ shown in Fig. 3, and the t-codes of $c_{s0}$ and $z_{s}$ obtained by permuting them, as the following cases. For example, if the original cell $c_{a0}(0,1,2,3,4,5)\times TL(1,3)=c_{s0}(0,3,2,1,4,5)$, then $z_{a}(1,3,9,11,17,19)$ takes the same location transition and yields the corresponding cluster $z_{s}(1,11,9,3,17,19)$.  

\textbf{Case (3-1)}. The subcodes of the t-codes of $z_{a}$ and $z_{s}$ are all distinct; that is, no two subcodes of a t-code are equal. For example, the six subcodes of the t-code $(1,3,9,11,17,19)$ are all distinct, whereas the t-code $(5,2,5,\ast,\ast,\ast)$ does not have the property, because the subcode 5 appears twice in it.

Suppose the t-code of $z_{a}$ is ($A_{0}$, $A_{1}$, $A_{2}$, $A_{3}$, $A_{4}$, $A_{5}$), whose six subcodes are all distinct, and the t-code of $z_{s}$, $(A_{0}^\prime$, $A_{1}^\prime$, $A_{2}^\prime$, $A_{3}^\prime$, $A_{4}^\prime$, $A_{5}^\prime)$, is a permutation of the t-code of $z_{a}$. Since the permutation is nontrivial, at least two locations, say $l_{i}$ and $l_{j}$, the t-codes of $z_{a}$ and $z_{s}$ are not equal, see the following equation, 
\begin{eqnarray}
&&z_{a}\left( 
\begin{array}{cc}
l_{i} & l_{j} \\ 
A_{i} & A_{j}%
\end{array}%
\right) \\
&&z_{s}\left( 
\begin{array}{cc}
l_{i} & l_{j} \\ 
A_{i}^{\prime } & A_{j}^{\prime }%
\end{array}%
\right).  \notag
\end{eqnarray}%
That is, at location $l_{i}$, $A_{i}\neq A_{i}^{\prime }$, and at location $l_{j}$, $A_{j}\neq A_{j}^{\prime }$. Because the bases of the original cells $c_{a0}$ and $c_{s0}$ are both $a$, it follows that $4a+A_{i}\neq 4a+A_{i}^{\prime }$ at $l_{i}$ and $4a+A_{j}\neq 4a+A_{j}^{\prime }$ at $l_{j}$; hence $z_{a}\neq z_{s}$.

\textbf{Case (3-2)}. The t-codes of $z_{a}$ and $z_{s}$ contain repeated subcodes. For example, in Fig. 3, the t-code $(5,2,5)$ contains two subcodes equal to 5, and the t-code $(2,2,10,10)$ contains two subcodes equal to 2 and two equal to 10. We mark these t-codes in Fig. 3 by pink and green shading. Because $z_{a}$ and $z_{s}$ contain repeated subcodes, they may possibly be equal if the permutation of $c_{a0}$ happens to occur only at the locations of those repeated subcodes. Below, we prove that if this is the case, $z_{a}$ and $z_{s}$ are not disjoint, that is, they share at least one component cell.

\textbf{Case (3-2-1)}. The repeated subcodes occur at location indices 0 and 1, and such cases are marked with pink shading in Fig. 3. For example, the t-codes $(3,3,6)$ and $(2,2,10,10)$ have this property. This case indicates that $z_{a}\times TL(0,1) = z_{s}$, and hence $c_{a0}\times TL(0,1) = c_{s0}$. Since the minimal transition starting from an original cell is always  $TE(0,1) = TL(0,1)$, the relation  $c_{a0}\times TL(0,1) = c_{s0}$ is equivalent to $c_{a0}\times TE(0,1) = c_{s0}$, indicating that $c_{s0}$ is the minimal-transition cell of $c_{a0}$. Similarly, the reverse relation also holds; namely, $c_{s0}\times TL(0,1) = c_{a0}$, indicating that $c_{a0}$ is the minimal-transition cell of $c_{s0}$. As a result, $z_{a}$ and $z_{s}$ are joint in that they share the two cells $c_{a0}$ and $c_{s0}$.  

\textbf{Case (3-2-2)}. The repeated subcodes do not occur at location indices 0 and 1, and such cases are marked with green shading in Fig. 3. For example, the t-codes $(6,1,6,11)$, $(5,2,5)$, $(1,3,10,10,17,19)$, and $(1,4,7,14,14)$ have this property. By checking Fig. 3 we can see that the first subcode of their l-codes is always the minimal transition $TL(0,1)$, and the second is always $TL(l_{i},l_{j})$, where $l_{i}$ and $l_{j}$ are the locations of the repeated subcodes in the cluster t-codes. For example, the two subcodes equal to 6 in the t-code $(6,1,6,11)$ are located at $(0,2)$, and the second subcode of its l-code is also $(0,2)$; the two subcodes equal to 14 in the t-code $(1,4,7,14,14)$ are located at $(3,4)$, and the second subcode of its l-code is also $(3,4)$.

Starting from the original cell $c_{a0}(0,1,\ldots,l_{i},\ldots,l_{j},\ldots)$, the first two transitions are: (1) $TE(0,1)=TL(0,1)$, then $c_{a0}$ connects to $c_{a1}(1,0,\ldots ,l_{i},\ldots,l_{j},\ldots )$; (2) $TL(l_{i},l_{j})$, after which $c_{a1}$ connects to $c_{a2}(1,0,\ldots,l_{j},\ldots,l_{i},\ldots )$. If the second transition is applied to $c_{a0}$ instead, it connects to the cell $c_{g1}(0,1,\ldots ,l_{j},\ldots,l_{i},\ldots )$ in the group \textbf{G11}.

Because $c_{s0}$ is obtained from $c_{a0}$ by a $TL(l_{i},l_{j})$ transition, $c_{s0}$ has the t-code $(0,1,\ldots ,l_{j},\ldots ,l_{i},\ldots )$, which is exactly the same as that of the cell $c_{a1}$ in group \textbf{G11} generated from $c_{a0}$; that is, $c_{s0}=c_{a1}$. If no \textbf{G11} transition occurs, the first transition of $c_{s0}$ must also be $TE(0,1)=TL(0,1)$; hence $c_{s0}$ connects to $c_{s1}(1,0,\ldots,l_{j},\ldots,l_{i},\ldots)$, which is exactly the same as that of the two-step transition cells $c_{a2}$ in group \textbf{G21}; that is, $c_{s1}=c_{a2}$. Therefore, either $c_{s0}=c_{a1}$ or $c_{s1}=c_{a2}$, and in either case, $z_{a}$ and $z_{s}$ must share at least one component cell.

Based on the analyses of the above \textbf{Cases (1)}, \textbf{(2)}, and \textbf{(3)}, we obtain the property that in standard 2D SCM model, if two 4-cc clusters are equal, they must be joint. Therefore if they are not joint, they must be not equal, namely, the chromatic codes of 4-ccd clusters are uniqueness. However, the proof is only for original cells of the form given in Eq. (\ref{eq:ca0cs0}). The original cells may contain different transition intervals and thus take the following forms
\begin{eqnarray}\label{eq:Newbase}
	&&c_{a0}\left( 
	\begin{array}{cccccc}
		l_{a0} & l_{a1} & l_{a2} & l_{a3} & l_{a4} & l_{a5} \\ 
		a & a+1 & b+a+2 & b+a+3 & e+a+4 & e+a+5
	\end{array}
	\right)  \\
	&&c_{s0}\left( 
	\begin{array}{cccccc}
		l_{s0} & l_{s1} & l_{s2} & l_{s3} & l_{s4} & l_{s5} \\ 
		s & s+1 & g+s+2 & g+s+3 & h+s+4 & h+s+5
	\end{array}
	\right)  \notag
\end{eqnarray}
\begin{lemma}
	Suppose $x$ is a subcode of a cell. After the  value transitions within a $m$-cc cluster, the allowed range for this subcode in the other component cells is $[x-m+1, x+m-1]$. 
\end{lemma}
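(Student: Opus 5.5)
The plan is to argue by tracking a single location along paths inside the cluster. Fix a component cell $c$ of the $m$-cc cluster and a location $l$ at which $c$ carries the subcode $x$. Take any other component cell $c'$. By the definition of an $m$-cc cluster, $c$ and $c'$ are path-connected through component cells only. A shortest such path visits distinct cells of the cluster, so it has at most $m$ cells and hence at most $m-1$ adjacent transitions $c\rightarrow \cdots \rightarrow c'$. This reduces the statement to bounding how far the subcode at the fixed location $l$ can drift along a path of at most $m-1$ steps.

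The key local fact is \textbf{Property 1}. Two connected cells have chromatic distance $2$, and both codes are permutations of $\{0,\ldots,n-1\}$. So one step swaps exactly two subcodes, and these must differ by exactly $1$; this is the statement already used in the paper that neighboring cells differ by a value transition $TE(a,a+1)$. Hence in one step the subcode at location $l$ either stays the same (if $l$ is not one of the two swapped locations) or changes by exactly $\pm1$. By induction on the number of steps, after $k$ steps the subcode at $l$ lies in $[x-k,x+k]$. Taking $k\le m-1$ gives the range $[x-m+1,x+m-1]$ for every other component cell $c'$.

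It remains to interpret ``the allowed range for this subcode'' correctly, and I expect this to be the main subtlety. One reading is the value at the same location in other cells, which is what the argument above covers. The other reading follows the value $x$ to whatever location it moves, which is what $TE$ transitions suggest. Under the second reading the same argument works: in each step the value $x$ either stays or is swapped with $x\pm1$. So the subcode now occupying the original location of $x$, or equivalently the value at any location where $x$ once sat, shifts by at most one per step. The bound is the same, and I would state the lemma for a fixed location and remark that it implies the value-tracking version.

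The only other point to check is the path length. Connectedness alone gives some path, possibly one that revisits cells. Passing to a shortest path inside the cluster's adjacency graph removes repeats, so its length is at most $m-1$ edges, which is all the bound needs. For $m=4$ this yields the range $[x-3,x+3]$ used in the subsequent analysis of the bases $b,e,g,h$ in Eq.~(\ref{eq:Newbase}).
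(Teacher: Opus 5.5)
Your proof is correct and takes the same route as the paper: each adjacency is a swap $TE(x,x\pm1)$, so the subcode at a fixed location changes by at most $1$ per step, and at most $m-1$ steps separate any two component cells, which keeps it in $[x-m+1,x+m-1]$. Your shortest-path step also tightens the paper's argument, since the paper's phrase ``value transitions occurred $m-1$ times'' is not literally true for ring-type clusters, whereas a shortest path inside the cluster always has at most $m-1$ edges.
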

\begin{proof}
	A $m$-cc cluster indicates that the value transitions occurred $m-1$ times among its component cells. Since the value transitions between two connected cells are either $TE(x,x+1)$ or $TE(x,x-1)$, if the $m-1$ value transitions are all $TE(x,x+1)$, then $x$ increases to $x+1$, $x+2$, $\ldots$, and then $x+m-1$; if the three value transitions are all $TE(x,x-1)$, then $x$ decreases to $x-1$, $x-2$, $\ldots$, and then $x-m+1$.
\end{proof}
\begin{property}
	Suppose $[x,x+1,\ldots,x+n]$ is a transition interval of the t-code of an original cell. Increasing or decreasing the base $x$, do not change the connectivity among the component cells of the generated cluster, under the condition that the increased or decreased subcodes in the transition interval do not equal to the subcodes in other transition intervals.
\end{property}
\begin{proof}
	Since the neighboring value transitions are either $TE(x,x+1)$ or $TE(x,x-1)$, which make their chromatic distances $d_{1}=|x - (x+1)|+|(x+1)-x|$ or $d_{2}=|x - (x-1)|+|(x-1)-x|$ are always 2. Because the base $x$ is canceled in the above two distances, increasing or decreasing $x$ do not change the distance value 2 and hence the connection between the neighboring cells.
\end{proof}
If increasing the base $x$ to $x+k$ $(k\ge1)$, the corresponding chromatic subcodes of the generated $4$-cc cluster at least increase 4k, while for the subcodes in other transition intervals, according to the Lemma 1, it at most increases 3. Because $4k > 3$ with $k\ge1$, therefore increasing bases will not lead the cluster chromatic subcodes in the new transition interval equal to those in other transition intervals without increasing their bases. In consequence, if 4-ccd clusters generated from the original cells with t-codes Eq. (\ref{eq:ca0cs0}) are with the unique chromatic codes, then those generated from the original cells with t-codes of Eq. (\ref{eq:Newbase}) are also unique, and this will give the final proof of \textbf{Theorem 1}. 

\section{Discussion and conclusion}

The entities discussed in this work reside in the Euclidean space $\mathbb{R} ^{n}$. Not only in the $\mathbb{R}^{2}$ plane but also in higher $n$-dimensional spaces, the chromatic codes of 4-ccd clusters retain their uniqueness. As the spatial dimension increases, additional hidden cells emerge, and their dimensionality increases accordingly; however, the connectivity properties of their chromatic codes and the connectivity patterns of the clusters remain unchanged.

The uniqueness of the chromatic codes of connected disjoint clusters implies spatial heterogeneity. Many spaces, such as the internal space of matter, geographical space, and ecological structural space, exhibit heterogeneity and complexity. In a chromatic space, the chromatic code of any subspace (cluster) is unique, indicating that the spatial tessellations (e.g., Voronoi diagrams) composed of these subspaces also have unique codes.

The necessary conditions for the uniqueness of connected disjoint clusters are ``connectivity'' and ``disjointness''. If these two conditions are not met, i.e., if the component cells are either disconnected or joint, their chromatic codes may be equal. From the perspective of spatial locality, the characteristics of a given space are determined by the ``proximity'' and ``distinctness'' of its constituent spatial units. If disconnected spaces are mixed together, or if different spaces share many common units, their heterogeneity will be reduced. 

The chromatic codes of connected disjoint clusters formed from more cells, for example, 5-ccd and 6-ccd clusters, also exhibit uniqueness according to computer-based verification; however, this property has not yet been mathematically proven. If methods similar to those used in this study were adopted, one would need to consider a vast number of possible neighboring cell configurations, which would make the generated $m$-cc cluster family considerably more complex than the 4-cc cluster family shown in Fig. 3; consequently, the proof process would become exceedingly cumbersome. This study is intended as a starting point, in the hope that interested readers can propose more concise approaches to ultimately proving the uniqueness of the chromatic codes of $m$-ccd clusters.

\end{document}